\algrenewcommand\algorithmicrequire{\textbf{\quad Input:}}
\algrenewcommand\algorithmicensure{\textbf{\quad Output:}}
\theoremstyle{plain}
\newtheorem{theorem}{Theorem}[section]
\newtheorem{lemma}[theorem]{Lemma}
\newtheorem{claim}[theorem]{Claim}
\theoremstyle{definition}
\newcommand{\qedsymb}{\hfill{\rule{2mm}{2mm}}}
\renewenvironment{proof}{\begin{trivlist} \item[\hspace{\labelsep}{\bf \noindent Proof.\/}] }{\qedsymb\end{trivlist}}%
\newcommand{\bbR}{\mathbb{R}}
\newcommand{\bbN}{\mathbb{N}}
\newcommand{\floor}[1]{\lfloor #1 \rfloor}
\newcommand{\opt}{\mathrm{OPT}}
\newcommand{\alg}{\mathrm{ALG}}
\begin{document}
\begin{titlepage}
\title{Ranking with Submodular Valuations}
\author{
Yossi Azar\thanks{Blavatnik School of Computer Science, Tel-Aviv
University, Tel-Aviv 69978, Israel. Email: {\tt azar@tau.ac.il}.}
    \and
Iftah Gamzu\thanks{Blavatnik School of Computer Science, Tel-Aviv
University, Tel-Aviv 69978, Israel. Email: {\tt iftgam@tau.ac.il}.
Supported by the Israel Science Foundation, by the European
Commission under the Integrated Project QAP funded by the IST
directorate as Contract Number 015848, by a European Research
Council (ERC) Starting Grant, and by the Wolfson Family Charitable
Trust.} }
\date{}
\maketitle

\begin{abstract}

We study the problem of ranking with submodular valuations. An
instance of this problem consists of a ground set $[m]$, and a
collection of $n$ monotone submodular set functions $f^1, \ldots,
f^n$, where each $f^i: 2^{[m]} \to \bbR_+$. An additional
ingredient of the input is a weight vector $w \in \bbR_+^n$. The
objective is to find a linear ordering of the ground set elements
that minimizes the weighted cover time of the functions. The cover
time of a function is the minimal number of elements in the prefix
of the linear ordering that form a set whose corresponding
function value is greater than a unit threshold value.

Our main contribution is an $O(\ln(1 / \epsilon))$-approximation
algorithm for the problem, where $\epsilon$ is the smallest
non-zero marginal value that any function may gain from some
element. Our algorithm orders the elements using an adaptive
residual updates scheme, which may be of independent interest. We
also prove that the problem is $\Omega(\ln(1 / \epsilon))$-hard to
approximate, unless $\mathrm{P} = \mathrm{NP}$. This implies that
the outcome of our algorithm is optimal up to constant factors.
\end{abstract}

\thispagestyle{empty}
\end{titlepage}

\section{Introduction} \label{sec:Intro}

Let $f: 2^{[m]} \rightarrow \bbR$ be a set function, where $[m] =
\{1, 2, \ldots, m\}$. The function $f$ is \emph{submodular} iff
$$
f(S) + f(T) \geq f(S \cup T) + f(S \cap T) \ ,
$$
for all $S, T \subseteq [m]$. An alternative definition of
submodularity is through the property of decreasing marginal
values. Given a function $f: 2^{[m]} \rightarrow \bbR$ and a set
$S \subseteq [m]$, the function $f_S$ is defined by $f_S(j) = f(S
\cup \{j\}) - f(S)$. The value $f_S(j)$ is called the incremental
marginal value of element $j$ to the set $S$. The \emph{decreasing
marginal values} property requires that $f_S(j)$ is non-increasing
function of $S$ for every fixed $j$. Formally, it requires that
$f_S(j) \geq f_T(j)$, for all $S \subseteq T$ and $j \in [m]
\setminus T$. Since the amount of information necessary to convey
an arbitrary submodular function may be exponential, we assume a
value oracle access to the function. A \emph{value oracle} for $f$
allows us to query about the value of $f(S)$ for any set $S$.
Throughout the rest of the paper, whenever we refer to submodular
functions, we shall also imply \emph{normalized} and
\emph{monotone} functions. Specifically, we assume that a
submodular function $f$ also satisfies $f(\emptyset) = 0$ and
$f(S) \leq f(T)$ whenever $S \subseteq T$.

In this paper, we focus our attention on the problem of ranking
with submodular valuations. An instance of this problem consists
of a ground set $[m]$, and a collection of $n$ monotone submodular
set functions $f^1, \ldots, f^n$, where each $f^i: 2^{[m]} \to
\bbR_+$. An additional ingredient of the input is a weight vector
$w \in \bbR_+^n$. The objective is to find a linear ordering of
the ground set elements that minimizes the weighted cover time of
the functions. The cover time of a function is the minimal number
of elements in the prefix of the linear ordering that form a set
whose corresponding function value is greater than some
predetermined threshold. More precisely, the objective is to find
a linear ordering $\pi: [m] \to [m]$ that minimizes $\sum_{i =
1}^{n} w_i c_i$, where $c_i$ is the cover time of function $f^i$,
defined as the minimal index for which $f^i( \{ \pi(1), \ldots,
\pi(c_i) \}) \geq 1$. Here, $\pi(t)$ stands for the element
scheduled at time $t$ according to the linear ordering $\pi$. It
is worth noting that the fact that each cover time is defined with
respect to a unit threshold value does not limit the generality of
the problem. In particular, given a vector $\lambda \in \bbR_+^n$,
where $\lambda_i$ determines the cover threshold of function
$f^i$, one can obtain an equivalent instance by normalizing each
$f^i$ with $\lambda_i$, and updating all thresholds to $1$.

\subsection{Our results}

Our main contribution is an $O(\ln(1 / \epsilon))$-approximation
algorithm for ranking with submodular valuations, where $\epsilon
= \min \{f^i_S(j) > 0\}$ is the smallest non-zero marginal value
that any function may gain from some element. We note that
elements can have a marginal value of zero. Our algorithm orders
the ground set elements using an \emph{adaptive residual updates
scheme}, which iteratively selects an element that has a maximal
marginal contribution with respect to an appropriately defined
residual cover of the functions. This approach has similarities
with the well-known multiplicative weights method (see, e.g.,
\cite{PlotkinST95,GargK07}). Our algorithm is motivated by the
observation that the natural greedy algorithm, which iteratively
selects an element based on its absolute marginal contribution to
the cover of the functions, performs poorly. In particular, a
greedy type algorithm misjudges elements with low marginal
contribution as less important, and therefore, unwisely schedules
them late.

We also establish that ranking with submodular valuations is
$\Omega(\ln(1 / \epsilon))$-hard to approximate, assuming that
$\mathrm{P} \neq \mathrm{NP}$. This implies that the outcome of
our algorithm is optimal up to constant factors. This result is
attained by demonstrating that the restricted setting of our
problem in which there is a single function to cover already
incorporates the set cover problem as a special instance. We would
like to emphasize that even though this single function setting
captures the computational hardness of the problem, it does not
capture its algorithmic essence. The main algorithmic challenge
that is addressed by our scheme is to obtain a good performance
guarantee when there are many functions, each of which has a
different linear order that best suits its needs. In particular,
one can easily validate that the natural greedy algorithm is
essentially optimal in the single function setting. One additional
interesting consequence of this result is that our problem
generalizes both the set cover problem and its min-sum variant.
This is the first problem formulation that has this property.

\subsection{Applications} \label{subsec:Applications}

\noindent {\bf Web search ranking.} One impetus for studying the
above problem is an application in web search ranking. Web search
has become an important part in the daily lives of many people.
Recently, there has been a great interest in incorporating users
behavior into web search ranking. Essentially, these studies make
an effort to personalize the web search results (see, e.g.,
\cite{TeevanDH05,AgichteinBD06,DouSW07,DupretP08}). However, in
the absence of any explicit knowledge of user intent, one has to
focus on how to produce a set of diversified results that properly
account for the interests of the overall user
population~\cite{ClarkeKCVABM08,AgrawalGHI09}. In particular, it
seems natural to utilize logs of previous search sessions, and try
to minimize the average effort of all users in finding the web
pages that satisfy their needs. When performing web search, a user
usually reads the result items from top to
bottom~\cite{JoachimsGPHRG07}. The time a user spends on reading
the result items is the overhead in web search.

The problem of ranking with submodular valuations can model the
above-mentioned scenario as follows: there is a set of $m$ search
result items and there are $n$ user types of known proportion.
Each user type has a submodular relevance function that quantifies
the information that the user type gains from inspecting any
subset of result items. The goal is to order the result items in a
way that minimizes the average effort of the user types. The
effort of a user type is the number of result items it has to
review until it gains a critical mass of relevant information.
Notice that submodularity suits naturally for the ranking
application since information in result items overlaps and does
not necessarily complement each other.

\smallskip \noindent {\bf Broadcast in mobile networks.} Another
application that can be modeled by ranking with submodular
valuations is broadcast in mobile networks. In this scenario,
there is a base station that needs to sequentially transmit a set
of $m$ data segments. In addition, there is a collection of $n$
clients, each of which is interested in some individual target
data. Each data segment contains a mix of information that may be
relevant to a number of clients. The amount of information depends
both on the data segment and the client. Moreover, there is
informational redundancy between different data segments. This
allows clients to extract their relevant target data from
different subsets of segments. A client can extract her target
data once she receives sufficient relevant information from the
data segments. The goal is to set an order for the transmission of
the data segments that minimizes the average latency of the
clients. The latency of a client is the earliest time in which she
receives data segments that contain enough information to decode
her target data. Notice that the amount of relevant information
that each client extracts from the data segments is submodular.

\subsection{Previous work on special cases}

The problem of ranking with submodular valuations extends the
\emph{multiple intents ranking} problem~\cite{AzarGY09}. One can
demonstrate that an input instance for the latter problem can be
translated to an instance of ranking with submodular valuations in
which each function $f^i$ is linear, and the value that the
function has for any element is either $0$ or some value $\nu_i
\in (0,1]$ common to that function. The multiple intents ranking
problem is known to admit a constant approximation by the work of
Bansal, Gupta and Krishnaswamy~\cite{BansalGK10}. Specifically,
they presented a clever randomized LP rounding algorithm that
improved upon a previous logarithmic
approximation~\cite{AzarGY09}. The \emph{min-sum set cover}
problem can be modelled as a special case of multiple intents
ranking in which each $\nu_i$ is boolean, i.e., $\nu_i \in
\{0,1\}$. The best known result for this problem is a
$4$-approximation algorithm that was developed by Feige,
Lov{\'a}sz and Tetali~\cite{FeigeLT04}. This algorithm was
implicit in the work of Bay-Noy et al.~\cite{Bar-NoyBHST98}. The
former paper also proved that $4$-approximation is best possible,
unless $\mathrm{P} = \mathrm{NP}$. The \emph{minimum latency set
cover} problem is another special case of multiple intents ranking
in which each function $f^i$ has exactly $1 / \nu_i \in \bbN_+$
elements with non-negative value $\nu_i$. Hassin and
Levin~\cite{HassinL05} studied this problem, and observed that it
can be modeled as a special case of the classic
precedence-constrained scheduling problem $1|prec|\sum_j w_j C_j$.
The latter problem has various $2$-approximation algorithms (see,
e.g., the survey~\cite{ChekuriK04}). Woeginger~\cite{Woeginger03}
demonstrated that the special case derived from minimum latency
set cover is as hard to approximate as the general scheduling
problem. This implies, in conjunction with a recent work of Bansal
and Khot~\cite{BansalK09}, that it is hard to approximate minimum
latency set cover to within a factor better than $2$, assuming a
variant of the Unique Games Conjuncture.

\subsection{Other related work}
Submodular functions arise naturally in operations research and
combinatorial optimization. One of the most extensively studied
questions is how to minimize a submodular function. A series of
results demonstrate that this task can be performed efficiently,
either by the ellipsoid algorithm~\cite{GrotschelLS81} or through
strongly polynomial time combinatorial
algorithms~\cite{Schrijver00,IwataFF01,Iwata03,Orlin07,Iwata08,IwataO09}.
Recently, there has been a surge of interest in understanding the
limits of tractability of minimization problems in which the
classic linear objective function was replaced by a submodular one
(see, e.g.,~\cite{SvitkinaF08,GoemansHIM09,GoelKTW09,IwataN09}).
Notably, these submodular problems are commonly considerably
harder to approximate than their linear counterparts. For example,
the minimum spanning tree problem, which is polynomial time
solvable with linear cost functions is $\Omega(n)$-hard to
approximate with submodular cost functions~\cite{GoelKTW09}, and
the sparsest cut problem, which admits an $O(\sqrt{\ln
n})$-approximation algorithm when the cost is
linear~\cite{AroraHK10} becomes $\Omega(\sqrt{n / \ln n})$-hard to
approximate with submodular costs~\cite{SvitkinaF08}. Our work
extends the tools and techniques in this line of research. In
particular, our results establish a computational separation of
logarithmic order between the submodular settings and the linear
setting, which admits a constant factor
approximation~\cite{AzarG10}.

\section{An Adaptive Residual Updates Scheme} \label{sec:AdaptiveScheme}

In this section, we develop a deterministic algorithm for the
problem under consideration that has an approximation guarantee of
$O(\ln(1 / \epsilon))$, where $\epsilon = \min \{f^i_S(j)
> 0\}$ is the smallest non-zero marginal value that any function
may gain from some element. An interesting feature of our
algorithm is that it generalizes several previous algorithmic
results. For example, if our algorithm is given a multiple intents
ranking instance then it behaves like the harmonic interpolation
algorithm of~\cite{AzarGY09}, if it is given a min-sum set cover
instance then it reduces to the constant approximation algorithm
of~\cite{FeigeLT04}, and if it is given a (submodular) set cover
instance then it acts like the well-known greedy algorithm for the
set cover problem~\cite{Wolsey82b}. Nonetheless, it is important
to emphasize that all these algorithms use fixed values in their
computation; in contrast, our algorithm employs dynamically
changing values.

\subsection{The algorithm} \label{subsec:Algorithm}

The adaptive residual updates algorithm, formally described below,
works in steps. In each step, the algorithm extends the linear
ordering with a non-selected element that maximizes the weighted
sum of its corresponding potential values. The potential value
$P_{ij}$ of element $j$ for function $f^i$ is initially equal to
the marginal value $f^i_\emptyset(j)$. As the algorithm
progresses, it is adaptively updated with respect to the selected
elements and the residual cover of $f^i$, as formally presented in
line~\ref{alg:AdaptiveDef}. Intuitively, this update fashion gives
more influence to values corresponding to functions whose cover
draw near their thresholds. We emphasize that this dynamic update
fashion is different than that of the exponential weights and
harmonic interpolation techniques. Also note that our adaptive
residual updates scheme is motivated by the observation that the
natural greedy algorithm, which orders elements based on their
absolute marginal contribution, fails to provide good
approximation. This insufficiency is exhibited in
Appendix~\ref{appsubsec:NonAdaptiveFail}.

\begin{algorithm}
\caption{Adaptive Residual Updates}\label{cap:AdaptiveUpdate}%
\begin{algorithmic}[1]
\Require A collection of $n$ submodular set functions $f^1, \ldots, f^n$, where each $f^i: 2^{[m]} \to \bbR_+$ %
\Statex \qquad\quad A weight vector $w \in \bbR_+^n$ %
\Ensure A linear ordering $\pi: [m] \to [m]$ of the ground set elements \smallskip %

\State $S \leftarrow \emptyset$ %
\For{$t \leftarrow 1$ to $m$} %
    \For{$i \leftarrow 1$ to $n$} %
        \ForAll{$j \in [m] \setminus S$} %
            \State $P_{ij} \leftarrow \begin{cases} 0 & f^i(S) \geq 1, \\ \min\left\{1, \frac{f^i(S \cup \{j\}) - f^i(S)}{1 - f^i(S)}\right\} & \makebox{otherwise} \end{cases}$ \label{alg:AdaptiveDef} %
        \EndFor %
    \EndFor %
    \State Let $j \in [m] \setminus S$ be the element with maximal $\sum_{i = 1}^{n}w_i P_{ij}$ %
    \State $S \leftarrow S \cup \{ j \}$ %
    \State $\pi(t) \leftarrow j$ %
\EndFor %
\end{algorithmic}
\end{algorithm}

\subsection{Analysis} \label{subsec:Analysis}

In the remainder of this section, we analyze the performance of
the algorithm. There are several techniques that we employ. We
begin by establishing an interesting algebraic inequality
applicable for any monotone function and any arbitrary sequence of
element additions. For the purpose of bounding the cost of the
solution of our algorithm, we compare it to a collection of
solutions induced by the optimal linear ordering applied to
truncated instances of the problem. We also utilize and extend the
analysis methods presented in~\cite{FeigeLT04,AzarGY09}.

\begin{theorem} \label{thm:Approx}
The adaptive residual updates algorithm constructs a linear
ordering whose induced cost is no more than $O(\ln(1 / \epsilon))$
times the optimal one.
\end{theorem}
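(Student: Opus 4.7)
The plan is to establish $\alg \leq O(\log(1/\epsilon))\,\opt$ through three ingredients: an algebraic inequality on per-function potential contributions, a submodular per-step lower bound that compares to OPT on truncated prefixes of its ordering, and a histogram-style charging argument that aggregates them.

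First I would prove the following algebraic inequality: for any monotone function $f$ whose positive marginals are at least $\epsilon$ and any sequence of element additions $j_1,\ldots,j_c$ that drives $f$ above the unit threshold, the induced potentials $P_t = \min\{1,\, f_{S_{t-1}}(j_t)/(1-f(S_{t-1}))\}$ satisfy $\sum_{t=1}^{c} P_t \leq O(\log(1/\epsilon))$. The argument splits the steps according to whether the residual gap $g_{t-1} = 1 - f(S_{t-1})$ is at least $\epsilon$. While $g_{t-1} \geq \epsilon$ one has $1 - P_t = g_t/g_{t-1} \geq \epsilon$, so using $-\ln(1 - P_t) \geq P_t$ and telescoping yields $\sum P_t \leq \ln(1/g_{\text{end}}) \leq \ln(1/\epsilon)$. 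Once $g_{t-1} < \epsilon$, any positive marginal (being at least $\epsilon > g_{t-1}$) forces $P_t = 1$ and covers $f$ immediately, so this regime contributes only a constant.

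Second, at each step $t$ of the algorithm, writing $\pi^*$ for the optimal ordering and $\pi^*[k] = \{\pi^*(1),\ldots,\pi^*(k)\}$, submodularity together with monotonicity gives $\sum_{j \in \pi^*[k]} f^i_{S_{t-1}}(j) \geq 1 - f^i(S_{t-1})$ for every function $i$ with $c_i^* \leq k$ still uncovered at time $t-1$, hence $\sum_{j \in \pi^*[k]} P_{ij}^{(t)} \geq 1$. Multiplying by $w_i$, summing over $i$, and averaging over $\pi^*[k]$ shows that some element in $\pi^*[k]$ achieves weighted potential at least $(W_{t-1}^{\alg} - W_k^*)/k$, where $W_s^{\alg}$ and $W_s^*$ denote the weight of functions uncovered by the algorithm and by the optimum after $s$ placements, respectively. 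Since the algorithm selects the maximizer, $Q_t := \sum_i w_i P_{ij_t}^{(t)} \geq (W_{t-1}^{\alg} - W_k^*)/k$ for every $k$.

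Third, I would combine the two bounds via a Feige--Lov\'asz--Tetali-style histogram charging. The target inequality is $W_{\alpha t}^{\alg} \leq 2 W_t^*$ for $\alpha = \Theta(\log(1/\epsilon))$; a horizontal-stretch accounting then gives $\alg = \sum_s W_s^{\alg} \leq 2\alpha \sum_t W_t^* = O(\log(1/\epsilon))\,\opt$. Assuming the contrary for some $t$, plugging $k = t$ into the per-step bound yields $Q_s \geq W_{\alpha t}^{\alg}/(2t)$ for every $s \leq \alpha t$, so $\sum_{s \leq \alpha t} Q_s \geq \alpha W_{\alpha t}^{\alg}/2$. The algebraic inequality, aggregated across functions, provides the matching upper bound on the same sum, and for $\alpha$ sufficiently large this produces the desired contradiction. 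The main obstacle is tightening this comparison: a crude per-function summation of the algebraic inequality yields only $\sum_{s \leq \alpha t} Q_s \leq O(\log(1/\epsilon))\,W_0$, which is too weak when $W_{\alpha t}^{\alg} \ll W_0$. Strengthening it requires invoking the per-step bound at several scales $k$ at once (the ``collection of solutions induced by the optimal linear ordering applied to truncated instances'' advertised in the analysis outline) and recombining the resulting family of inequalities via the harmonic-interpolation bookkeeping of AzarGY09; choreographing these multipliers so that the ratio comes out as $O(\log(1/\epsilon))$ rather than the cruder $O(\log n)$ is the technical crux.
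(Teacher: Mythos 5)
Your first two ingredients are sound and correspond closely to the paper's own machinery: the per-function bound $\sum_t P_{i\pi(t)} \leq O(\ln(1/\epsilon))$ is exactly Claim~\ref{claim:SequenceStretch}/Lemma~\ref{lemma:Width} (your telescoping via $-\ln(1-P_t)\ge P_t$ needs the terminal covering step split off separately, since $P_t=1$ there, but this is cosmetic), and your per-step inequality $Q_t \ge (W^{\alg}_{t-1}-W^*_k)/k$ is a clean, direct form of what the paper derives through its truncated instances via submodularity and the greedy selection rule. The problem is ingredient three: you do not actually combine them. You candidly observe that the natural aggregation --- contradiction hypothesis $W^{\alg}_{\alpha t} > 2W^*_t$, lower bound $\sum_{s\le \alpha t} Q_s \ge \alpha W^{\alg}_{\alpha t}/2$, upper bound $\sum_s Q_s \le \gamma W_0$ --- only yields $W^{\alg}_{\alpha t} \lesssim W_0$, and you defer the repair to an unspecified multi-scale ``harmonic-interpolation bookkeeping.'' As written, the theorem is therefore not proved.

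The missing idea is that both of your ingredients must be applied \emph{locally at each step $t'$}, not globally, and the right quantity to control is the step's penalty $\Lambda_{t'} = R_{t'}/Q_{t'}$ rather than a uniform stretch factor $\alpha$. First, summing your algebraic inequality only over the functions still uncovered at step $t'$ (all others have zero potential from $t'$ on) gives the tail bound $\Delta_{t'} = \sum_{s\ge t'} Q_s \le \gamma R_{t'}$ --- this is Lemma~\ref{lemma:Width}, and it replaces the useless $\gamma W_0$ with $\gamma R_{t'}$. Second, applying your per-step bound at step $t'$ itself with the scale $k=\floor{\Lambda_{t'}/2}$ gives $W^*_k \ge R_{t'} - kQ_{t'} \ge R_{t'}/2 \ge \Delta_{t'}/(2\gamma)$. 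These two facts together say precisely that the algorithm's histogram (bar widths $Q_{t'}$, heights $\Lambda_{t'}$, total area $\alg=\sum_{t'}\Lambda_{t'}Q_{t'}$), shrunk by $2\gamma$ horizontally and $2$ vertically and right-aligned, sits inside the optimal histogram, whence $\alg \le 4\gamma\,\opt$. The paper routes the second fact through truncated instances with relaxed (fractional) cost functions and a histogram-containment lemma (Lemma~\ref{lemma:Truncated}) to handle partial covers; your direct formulation of the per-step bound would let you shortcut some of that, but the pointwise, per-$t'$ charging is indispensable and is exactly what your proposal stops short of. No recombination of inequalities across multiple scales $k$ is needed.
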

\begin{proof}
We begin by introducing the notation and terminology to be used
throughout this proof:

\begin{itemize}
\item Let $\opt$ and $\alg$ be the cost induced by the optimal
linear ordering and the linear ordering $\pi$ constructed by the
algorithm, respectively.

\item Let $P$ be the final state of the potential values matrix
maintained during the algorithm. Notice that the potential values
of each column of $P$ are the values that the corresponding
element had when it was selected by the algorithm.

\item Let $I_t = \{ i : c_i \geq t \}$ be the indices of the
functions that were not covered before step $t$ of the algorithm,
and let $R_t$ be the relative cost of step $t$ of the algorithm.
Specifically, $R_t = \sum_{i \in I_t} w_i$. In addition, let $Q_t
= \sum_{i=1}^{n} w_i P_{i\pi(t)}$ be the weighted sum of potential
values corresponding to the element selected at step $t$ of the
algorithm, and $\Delta_{t'} = \sum_{t = t'}^{m} Q_t$. Finally, let
$\Lambda_t = R_t / Q_t$ be the penalty of step $t$.
\end{itemize}

We can now reinterpret the cost induced by the linear ordering
$\pi$ using the mentioned notation. In particular, one can
validate that $\alg = \sum_{i=1}^{n} w_i c_i  = \sum_{t=1}^{m} R_t
= \sum_{t=1}^{m} \Lambda_t Q_t$. The next lemma bounds
$\Delta_{t'}$ in terms of $R_{t'}$.

\begin{lemma} \label{lemma:Width}
Let $\gamma = \ln(1 / \epsilon) + 2$. Then, $\Delta_{t'} \leq
\gamma R_{t'}$, for every $t' \in [m]$.
\end{lemma}
\begin{proof}
In what follows, we demonstrate that $\sum_{t = 1}^m P_{i \pi(t)}
\leq \gamma$ holds for every function $f^i$. Notice that if we
establish this argument then the lemma follows since
$$
\Delta_{t'} = \sum_{t = t'}^m \sum_{i=1}^{n} w_i P_{i\pi(t)} =
\sum_{t = t'}^m \sum_{i \in I_{t'}} w_i P_{i\pi(t)} \leq \sum_{t =
1}^m \sum_{i \in I_{t'}} w_i P_{i\pi(t)} \leq \gamma \sum_{i \in
I_{t'}} w_i = \gamma R_{t'} \ ,
$$
where the second equality results by noticing that each function
$f^i$ covered before step $t'$ must have $P_{i\pi(t)} = 0$ for
every $t \geq t'$. As a result, the only functions that may have
strictly positive potential values are those that were not covered
before step $t'$, namely, those in $I_{t'}$.

Consider the function $f^i$, and let us assume that its cover time
is $k$. Let $S_t = \{ \pi(1), \ldots, \pi(t) \}$ be the set of
elements ordered up to (and including) step $t$ according to
$\pi$. In particular, let $S_0 = \emptyset$. One can verify that
the potential values of function $f^i$ satisfy $P_{i \pi(t)} =
(f^i(S_{t}) - f^i(S_{t-1}))/(1 - f^i(S_{t-1}))$ for every $t < k$,
$P_{i \pi(k)} = 1$, and $P_{i \pi(t)} = 0$ for every $t
> k$. Consequently, we get that
$$
\sum_{t=1}^{m} P_{i \pi(t)} = \sum_{t=1}^{k-1} \frac{f^i(S_{t}) -
f^i(S_{t-1})}{1 - f^i(S_{t-1})} + 1 \leq \ln(1 / \epsilon) + 2  \
,
$$
where the inequality follows from
Claim~\ref{claim:SequenceStretch}. This claim establishes a
generic bound which applies to any monotone function and any
arbitrary sequence of element additions. The desired bound is
obtained by utilizing the claim with respect to the submodular
function $f^i$ and the collection of sets $S_0, \ldots, S_{k-1}$.
One should also notice that $f^i(S_{k-1}) < 1$ by construction,
and that $\epsilon \leq \delta$.~
\end{proof}

\begin{claim} \label{claim:SequenceStretch}
Given a monotone function $f: 2^{[m]} \to \bbR_+$, and a
collection of set $S_0 \subseteq \cdots \subseteq S_\ell \subseteq
[m]$ such that $f(S_0) = 0$ and $f(S_\ell) < 1$ then
$$
\sum_{t = 1}^{\ell} \frac{f(S_{t}) - f(S_{t-1})}{1 - f(S_{t-1})}
\leq \ln(1 / \delta) + 1 \ ,
$$
where $\delta = \min_t\{f(S_{t}) - f(S_{t-1}) > 0\}$.
\end{claim}
\begin{proof}
The monotonicity property of the function $f$ guarantees that $0 =
f(S_0) \leq \cdots \leq f(S_\ell) \leq 1$. We can also assume
without loss of generality that $f(S_\ell) - f(S_{\ell-1}) \geq
\delta > 0$, since otherwise, the last term in the above-mentioned
summation must be equal to $0$, and therefore, may be neglected.
Now, notice that for any $t = 1, \ldots, \ell - 1$,
$$
\int_{f(S_{t-1})}^{f(S_{t})} \frac{1}{1-x}dx \geq
\int_{f(S_{t-1})}^{f(S_{t})} \frac{1}{1-f(S_{t-1})}dx =
\frac{f(S_{t}) - f(S_{t-1})}{1-f(S_{t-1})} \ ,
$$
where the first inequality results from the fact that the function
$1 / (1 - x)$ is monotonically increasing for $x \in [0,1)$.
Furthermore, notice that $(f(S_\ell) - f(S_{\ell-1})) / (1 -
f(S_{\ell-1})) < 1$. This simply follows since we know that
$f(S_\ell) < 1$. Combining the previously stated arguments, we
attain that
\begin{eqnarray*}
\sum_{t = 1}^{\ell} \frac{f(S_{t}) - f(S_{t-1})}{1 - f(S_{t-1})} &
\leq & \sum_{t = 1}^{\ell - 1} \frac{f(S_{t}) - f(S_{t-1})}{1 -
f(S_{t-1})} + 1 \leq \sum_{t = 1}^{\ell - 1} \int_{f(S_{t-1})}^{f(S_{t})} \frac{1}{1-x}dx + 1 \\
& = &\int_{f(S_0)}^{f(S_{\ell-1})} \frac{1}{1-x}dx + 1 =  \ln
\left(\frac{1}{1 - f(S_{\ell-1})}\right) + 1 \leq \ln(1 / \delta)
+ 1 \ ,
\end{eqnarray*}
where the last inequality holds since $1-f(S_{\ell-1}) > f(S_\ell)
-f(S_{\ell-1}) \geq \delta$.~
\end{proof}

We continue by introducing a collection of histograms. These
histograms will be utilized to bound the cost of the algorithm in
terms of the cost of the optimal solution.

\smallskip \noindent {\bf The optimal solution as a histogram.}
The histogram that relates to the optimal solution consists of $n$
bars, one for each function. The bar associated with function
$f^i$ has a width of $w_i$, while its height is equal to the step
in which that function was covered in the optimal solution. A
function is regarded as covered at step $t$ if its cover time
according to the optimal ordering is $t$. The bars in the
histogram are ordered according to the steps in which the
corresponding functions were covered in the optimal solution.
Notice that this implies that the histogram is non-decreasing.
Furthermore, notice that the total width of the histogram is
$\sum_{i=1}^n w_i$, and the overall area beneath it is $\opt$.

\medskip \noindent {\bf A collection of truncated solutions as histograms.}
We now define a collection of $m$ histograms, each of which
corresponds to a solution of a truncated input instance with
respect to some step $t$ of the algorithm. Informally, a truncated
instance is a relaxation of the input instance that admits better
solutions than the optimal solution. As a result, the collection
of histograms establishes a connection between the optimal
solution and the solution of the algorithm. For the purpose of
defining the truncated input instance corresponding to step $t$,
let $S = \{ \pi(1), \ldots, \pi(t - 1) \}$ be the set of elements
selected by the algorithm before step $t$. The truncated instance
is obtained by incrementally applying the following two
modification steps to the underlying input instance:

\medskip {\sf (i) a set of elements is given for free.} We modify
the instance by giving all the elements in $S$ for free. The
impact of this modification is two-fold: first, all the functions
that were covered by the algorithm up to that step cannot incur
any cost, and basically, they can be removed from the modified
instance; second, the threshold of each function $f^i$ that was
not covered by the algorithm up to that step decreases by
$f^i(S)$, that is, its threshold in the modified instance becomes
$\lambda_i = 1 - f^i(S)$. Now, notice that in order to translate
this instance to the canonical form in which all thresholds are
equal to $1$, one has to normalize each $f^i$ with the
corresponding threshold $\lambda_i$. It is important to note that
the marginal values of each function $f^i$ at step $t$ of the
algorithm are normalized exactly by this term (to obtain the
corresponding potential values).

\smallskip {\sf (ii) the cost of each function is relaxed.}
It is implicit in our problem definition that each function $f^i$
has a matching cost function $\rho_i: [0,1] \rightarrow \bbR_{+}$,
which represents the cost that $f^i$ collects per step with
respect to its cover. Specifically, letting $x$ indicate the
``extent of cover'' of function $f^i$, its cost is a simple
step-function, defined as follows:
$$
\rho_i(x) =
\begin{cases}
w_i & x < 1, \\
0 & x = 1.
\end{cases}
$$
Namely, the function $f^i$ collects a cost of $w_i$ in each step
until it is covered. We modify the cost function of each $f^i$ to
be continuously decreasing with constant derivative. In
particular, the updated cost function of $f^i$ becomes $\rho_i(x)
= w_i \cdot (1 - x)$. Notice that this modification implies that
even a partial cover of a function decreases its cost per step.
The interpretation one should have in mind is that $x$ represents
the fraction of covered weight, and once some fraction of weight
is covered it stops incurring cost.

\medskip The crucial observation one should make regarding the
resulting truncated instance is that the linear ordering
constructed by the optimal algorithm induces a solution for this
instance whose matching histogram is non-decreasing and completely
contained within the optimal solution histogram. The latter
argument is formally presented and proved in the following lemma.

\begin{lemma} \label{lemma:Truncated}
The optimal linear ordering constructed with respect to the
original instance induces a solution for the truncated instance
whose matching histogram is completely contained within the
optimal solution histogram when aligned to its lower right
boundary.
\end{lemma}
\begin{proof}
Prior to proving this lemma, it is important to note that the
histogram built with respect to the induced solution is defined in
a slightly different way than the optimal solution histogram. The
difference results from the fact that the cost per step of each
function $f^i$ decreases as it is covered. Specifically, this
truncated solution histogram has the same interpretation of the
axes as the optimal solution histogram, and its bars are ordered
according to non-decreasing heights. However, the number of bars
depends on the underlying solution. For instance, suppose that
function $f^i$ was incrementally covered using positive portions
$\delta_1, \ldots, \delta_k$ such that $\sum_{\ell = 1}^k
\delta_\ell = 1$ in steps $t_1, \ldots, t_k$. This function gives
rise to $k$ bars in the histogram, where bar $\ell$ has a width of
$w_i \delta_\ell$ and a height of $t_\ell$. Indeed, the total area
beneath these bars is the relative cost of function $f^i$ since
the cost per step of that function is $w_i$ between steps $t_0 =
0$ and $t_1$, $w_i \cdot (1 - \delta_1)$ between steps $t_1$ and
$t_2$, and so on. Accordingly, one can validate that
$$
w_i \sum_{\ell = 1}^k  \Big(1 - \sum_{r = 1}^{\ell - 1}
\delta_r\Big) \cdot (t_\ell - t_{\ell-1}) = w_i \sum_{\ell = 1}^k
\delta_\ell t_\ell \ .
$$
Now, for the purpose of establishing the lemma, we utilize the
following simple claim, whose proof appears in
Appendix~\ref{appsubsec:TransformationProof}. The claim presents a
transformation that may be applied to non-decreasing histograms
and does not increase their upper boundary.

\begin{claim} \label{claim:Transformation}
Consider a non-decreasing histogram and suppose we modify it by
decreasing the height of some part of a bar and then updating its
$x$-axis position to maintain the non-decreasing property. The
resulting histogram is completely contained within the primary
histogram.
\end{claim}

We demonstrate that the modifications used to generate the
truncated instance can be translated into a sequence of the
above-mentioned transformation that generates the truncated
solution histogram from the optimal solution histogram:

\medskip {\sf (i) an element is given for free.} Suppose that the
element given for free was scheduled at step $\ell$ of the optimal
linear ordering. Notice that all the elements scheduled at steps
$\ell' > \ell$ according to the optimal linear ordering are
scheduled at step $\ell' - 1$ in the induced solution. This
follows as the element under consideration does not appear in the
induced linear ordering for the truncated instance. This implies
that the cover time of all the functions that are critical with
respect to these elements decreases by $1$. We say that an element
$j$ is critical for a function if that function topped its
threshold after element $j$ was selected. Accordingly, the height
of the corresponding bars in the histogram decreases by $1$. This
translates to a sequence of the mentioned transformation.
Furthermore, notice that all the functions that were covered up to
step $\ell$ according to the optimal linear ordering may be
covered in prior steps in the induced solution. This is due to the
``free partial cover'' that the element induces. In particular,
the functions that were covered at step $\ell$ of the optimal
linear ordering must be covered in prior steps in the induced
solution. The cover time of each of these function may vary
depending on the extent of their cover with respect to the element
under consideration and previously scheduled elements. Still, it
is clear that this cover time must be strictly smaller than in the
optimal linear ordering. Consequently, the height of the bars
associated with these functions decreases. Again, this translates
to a sequence of the mentioned transformation.
Figure~\ref{fig:Translation} provides an illustration of this
modification.

\begin{figure}[!hbt]
\centerline{ \scalebox{0.35}{ \psfig{file=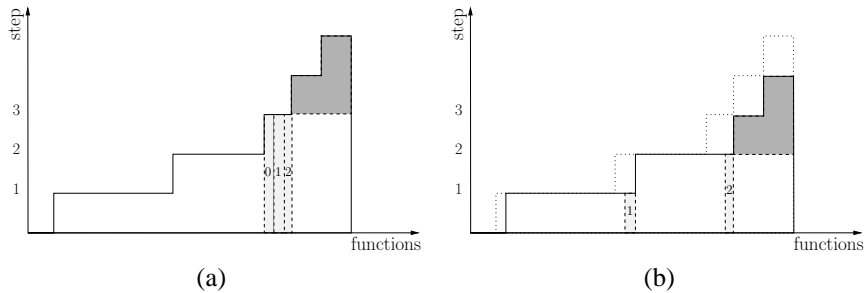} } }
\caption{The impact of giving the element scheduled at step $3$
for free: (a) The initial histogram. Note that the numbers written
inside the bars represent the new cover time of the corresponding
functions after the element under consideration was given for
free. (b) The resulting histogram.} \label{fig:Translation}
\end{figure}

\smallskip {\sf (ii) The cost of a function is relaxed.}
Suppose the cost of some function $f^i$ was relaxed, and let us
assume that $f^i$ was incrementally covered using positive
portions $\delta_1, \ldots, \delta_k$ such that $\sum_{\ell = 1}^k
\delta_\ell = 1$ in steps $t_1, \ldots, t_k$ of the optimal linear
ordering. Notice that as a result of the relaxation, the histogram
should consist of $k$ bars instead of the single bar corresponding
to function $f^i$. In particular, each bar $\ell$ should have a
width of $w_i \delta_\ell$ and a height of $t_\ell$. This can be
interpreted as replacing a single bar having respective width and
height of $w_i$ and $t_k$ with $k$ bars whose total width is $w_i
\sum_{\ell = 1}^k \delta_\ell = w_i$ and each has a height of at
most $t_k$. It is easy to verify that this may translate to a
sequence of the mentioned transformation.
\end{proof}

\noindent {\bf The solution of the algorithm as a histogram.} The
histogram that corresponds to the solution generated by the
algorithm consists of $nm$ bars, one for each entry of the
potential values matrix. The width of each bar corresponding to
entry $(i,j)$, covered at step $t$ of the algorithm, is its
weighted potential value $w_i P_{ij}$, while its height is the
penalty of the corresponding step $\Lambda_{t}$. Note that an
entry $(i,j)$ is regarded as covered at step $t$ of the algorithm
if $\pi(t) = j$. The bars are ordered according to the steps in
which the corresponding entries were covered by the algorithm.
Notice that the ordering of an element $j$ at step $t$ gives rise
to $n$ bars in the histogram whose total width is $Q_t =
\sum_{i=1}^{n} w_i P_{ij}$. Moreover, note that the total width of
the histogram is $\sum_{t=1}^m Q_t$, which is at least as large as
(and maybe much larger than) the total width of the optimal
histogram, and that the area beneath the histogram is
$\sum_{t=1}^m \Lambda_{t} Q_t$, which is precisely $\alg$, as
previously noted.

\medskip Having all the histograms definitions in place,
we are now ready to prove the theorem. We claim that the area
beneath the histogram corresponding to the solution of the
algorithm is at most $4\gamma$ times larger than the area beneath
the histogram of the optimal solution, where $\gamma = \ln(1 /
\epsilon) + 2$. Let us consider the transformation that shrinks
the width and height of each bar of the algorithm's histogram by a
factor of $2\gamma$ and $2$, respectively. Specifically, after
applying the transformation, the bar corresponding to entry
$(i,j)$ covered at step $t$ has a width of $w_i P_{ij} /
(2\gamma)$ and a height of $\Lambda_{t} / 2$. We next argue that
this shrunk histogram is completely contained within the optimal
solution histogram when aligned with its lower right boundary.
Notice that this implies that the area beneath the shrunk
histogram is no more than the area beneath the optimal solution
histogram, implying that $\alg / (4\gamma) \leq \opt$, and
therefore, proving the theorem.

For the purpose of establishing this containment argument, let us
focus on an arbitrary point $p'$ in the histogram of the
algorithm. We assume without loss of generality that it lies in
the bar corresponding to entry $(i',j')$ covered during step $t'$
of the algorithm. This implies that the height of $p'$ is at most
$\Lambda_{t'}$, and its distance from the right side boundary is
no more than $\Delta_{t'}$. Let us consider the point $p$, which
is the mapping of $p'$ in the shrunk histogram. Note that the
height of $p$ is at most $\Lambda_{t'} / 2$, while its distance
from the right boundary is at most $\Delta_{t'} / (2\gamma)$. In
the following, we prove that $p$ lies within the truncated
solution histogram corresponding to step $t'$. This is achieved by
demonstrating that the linear ordering induced by the optimal
solution for the truncated instance has at least $\Delta_{t'} /
(2\gamma)$ weight to cover by time step $\floor{\Lambda_{t'} /
2}$. In fact, we establish a more powerful property that states
that \emph{any} ordering of elements of the truncated instance has
at least $\Delta_{t'} / (2\gamma)$ weight to cover by time step
$\floor{\Lambda_{t'} / 2}$. Now, recall that
Lemma~\ref{lemma:Truncated} guarantees that the histogram of the
truncated solution is completely contained within the optimal
solution histogram, and hence, we obtain that $p$ must also lie
within the histogram of the optimal solution.

Let us concentrate on the set of elements in the truncated
instance corresponding to step $t'$. We argue that any element
selected in any step of any linear ordering cannot reduce the
weight by more than $Q_{t'} = \sum_{i = 1}^n w_i P_{ij'}$. This
argument results from the construction of the truncated instance,
the submodularity of the functions, and the greedy selection rule
of the algorithm. Specifically, the construction of the truncated
instance guarantees that the (initial) marginal value of each
element $j$ for each function $f^i$ is equal to the potential
value $P_{ij}$ at step $t'$ of the algorithm. This relates to the
normalization after the set of elements was given for free. Also
note that $P_{ij}$ can be interpreted as the fraction of the
weight $w_i$ that may be covered when selecting element $j$. This
corresponds to the modification of the cost function of each
function in the truncated instance to be continuously decreasing
with constant derivative. The submodularity of the functions,
which involve decreasing marginal values, ensures that the
marginal value of each element can only decrease over time, that
is, it cannot be greater than $P_{ij}$ in any future step. This
implies, in conjunction with the greedy selection rule of the
algorithm, which selects the element $j'$ that maximizes the
above-mentioned term, that any element selected in any step of any
linear ordering cannot cover a weight of more than $Q_{t'}$.
Consequently, any linear ordering may cover at most
$\floor{\Lambda_{t'} / 2} \cdot Q_{t'} \leq R_{t'} / 2$ weight by
step $\floor{\Lambda_{t'} / 2}$. Recall that the overall weight of
the functions in the truncated instance is exactly $R_{t'}$, and
thus, at least $R_{t'} / 2$ weight is left uncovered. Finally,
Lemma~\ref{lemma:Width} guarantees that $R_{t'} / 2 \geq
\Delta_{t'} / (2\gamma)$.~
\end{proof}

\section{An Inapproximability Result} \label{sec:Hardness}

In this section, we establish that ranking with submodular
valuations is $\Omega(\ln(1 / \epsilon))$-hard to approximate,
assuming that $\mathrm{P} \neq \mathrm{NP}$. This implies that the
outcome of the algorithm from Section~\ref{sec:AdaptiveScheme} is
optimal up to constant factors. The essence of the proof is by
showing that our problem incorporates the set cover problem as a
special instance. In fact, we demonstrate that even the seemingly
simple scenario in which there is only one function to cover
already generalizes the set cover problem.

\begin{theorem} \label{thm:Inapprox}
The ranking with submodular valuations problem cannot be
approximated within a factor of $c \ln(1 / \epsilon)$, for some
constant $c > 0$, unless $\mathrm{P} = \mathrm{NP}$.
\end{theorem}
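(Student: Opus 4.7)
The plan is to reduce from the minimum set cover problem, exploiting the hint in the excerpt that even the single-function restriction of ranking with submodular valuations already captures set cover. Given a set cover instance with universe $U$, $|U| = N$, and collection $\mathcal{S} = \{S_1, \ldots, S_m\}$ covering $U$, I would construct a ranking instance with ground set $[m]$ (element $j$ identified with $S_j$), a single monotone submodular function $f^1 : 2^{[m]} \to \bbR_+$ defined by $f^1(T) = |\bigcup_{j \in T} S_j|/N$, and weight $w_1 = 1$. The function $f^1$ is the normalized coverage function, which is well known to be monotone, normalized, and submodular; with a unit threshold, $f^1(T) \geq 1$ holds exactly when $\{S_j : j \in T\}$ covers $U$.

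Next I would observe that for any linear ordering $\pi$ the induced cost is exactly $c_1$, the smallest index $k$ such that $\{S_{\pi(1)}, \ldots, S_{\pi(k)}\}$ covers $U$. Thus the optimum cost of the constructed ranking instance is precisely the size of a minimum set cover of $(U, \mathcal{S})$. Moreover, since $N f^1$ is integer valued on $\{0, 1, \ldots, N\}$, every strictly positive marginal $f^1_S(j) = (|\bigcup_{j' \in S\cup\{j\}} S_{j'}| - |\bigcup_{j' \in S} S_{j'}|)/N$ is at least $1/N$, so the instance parameter $\epsilon$ satisfies $\epsilon = 1/N$, giving $\ln(1/\epsilon) = \ln N$.

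Combining these observations with the classical inapproximability of set cover, namely that approximating minimum set cover within a factor of $c' \ln N$ for a sufficiently small constant $c' > 0$ is NP-hard (Lund--Yannakakis, Feige, Raz--Safra), yields the theorem: a $c \ln(1/\epsilon)$-approximation for ranking with submodular valuations for a small enough constant $c > 0$ would translate into a $c \ln N$-approximation for minimum set cover, contradicting $\mathrm{P} \neq \mathrm{NP}$.

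The main (and only) subtlety is confirming that $\epsilon$ equals $1/N$ and not something smaller; this is immediate from the integrality of $N f^1$, so the proof is really a short verification sandwiched between the reduction and the invocation of set cover hardness. No new algorithmic or combinatorial idea beyond the straightforward encoding is needed.
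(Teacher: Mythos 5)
Your proposal is correct and is essentially identical to the paper's own proof: the same single-function reduction from set cover via the normalized coverage function $f(T) = \tfrac{1}{N}\bigl|\bigcup_{j \in T} S_j\bigr|$, the same cost-preserving correspondence between orderings and covers, and the same invocation of the $\Omega(\ln N)$ hardness of set cover. The only nitpick is that $\epsilon$ need only satisfy $\epsilon \geq 1/N$ rather than $\epsilon = 1/N$, but that is exactly the direction of inequality required, as the paper also notes.
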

\begin{proof}
An instance of a set cover problem consists of the ground set $X =
\{1,\ldots,n\}$ and a collection of sets ${\cal F} = \{ S_1,
\ldots, S_m \} \subseteq 2^X$. The objective is to find a
subfamily of ${\cal F}$ of minimum cardinality that covers all
elements in $X$. Set cover is known to be NP-hard to approximate
within a factor of $O(\ln n)$. In other words, there is a constant
$c > 0$ such that approximating set cover in polynomial time
within a factor of $c \ln n$ implies ${\rm P} = {\rm NP}$. This
result follows by plugging the proof system of Raz and
Safra~\cite{RazS97}, or alternatively, Arora and
Sudan~\cite{AroraS03} into a reduction of Bellare et
al.~\cite{BellareGLR93} (see also the result of
Feige~\cite{Feige98}, which shows inapproximability under a
slightly stronger assumption).

Given a set cover instance, we define an instance of ranking with
submodular valuations as follows. There are $m$ elements, each
corresponds to a set in ${\cal F}$. Furthermore, there is one
submodular set function whose corresponding weight is $1$, namely,
$w = \langle 1 \rangle$. The submodular function is defined as
$$
f(T) = \frac{1}{n} \Big|\bigcup_{\ell \in T} S_\ell\Big| \ .
$$
Notice that the resulting instance is a valid ranking with
submodular valuations instance as the function $f$ is normalized,
monotone, and submodular. For example, $f$ satisfies the
decreasing marginal values property since
$$
f_T(j) = \frac{1}{n} \Big|\Big\{ i : i \in S_j \text{ and } i
\notin \bigcup_{\ell \in T} S_\ell\Big\}\Big|
$$
is non-increasing function of $T$ for every fixed $j$.

It is easy to see that a set cover in the original instance can be
converted to a linear ordering of the elements in the
newly-created instance of identical cost. Specifically, one should
order the elements that correspond to the sets of the set cover
first (in some arbitrary way), and then order the remaining
elements. Conversely, it is not difficult to verify that given a
linear ordering of the elements, we can perform a similar
cost-preserving transformation in the opposite direction. This
implies that unless ${\rm P} = {\rm NP}$, it is impossible to
approximate the ranking with submodular valuations problem to
within a factor of $c \ln n \geq c \ln (1/\epsilon)$, where the
inequality holds as $\epsilon$ is the smallest non-zero marginal
value which is clearly at least $1 / n$.~
\end{proof}

\section{A Concluding Remark} \label{sec:Remarks}

{\bf Incorporating cost into the problem.} As previously noted, it
is implicit in the definition of the problem that each function
that needs to be covered has a matching step-function representing
the cost that the function collects per step with respect to its
cover. Specifically, the step-function corresponding to function
$f^i$ is determined by its step height $w_i$. It is only natural
to consider the generalization in which each $f^i$ has an
arbitrary non-increasing cost function $\rho_i: [0,1] \rightarrow
\bbR_{+}$ instead of the height parameter $w_i$. One can
demonstrate that our techniques can be utilized to solve this
variant. The main idea is to reduce the non-increasing cost
function case to the step-function case. This can be done by
carefully approximating each non-increasing cost function by a
collection of step-functions, as schematically described in
Figure~\ref{fig:CostfuncGeneral}. The full version of the paper
will provide a detailed description of this result.

\begin{figure}[!hbt]
\centerline{ \scalebox{0.35}{ \psfig{file=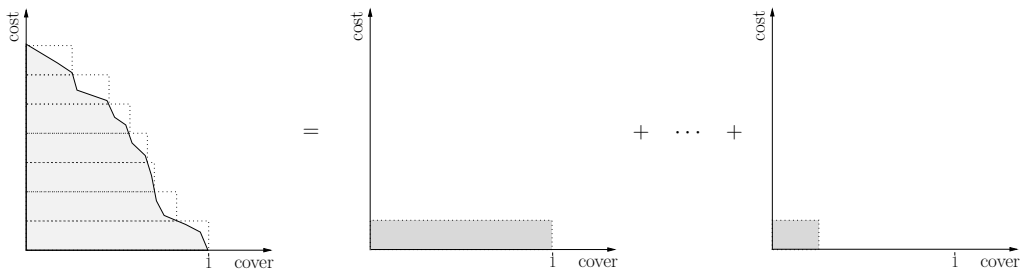} } }
\caption{The reduction from an arbitrary non-increasing cost
function to step-functions.} \label{fig:CostfuncGeneral}
\end{figure}

\paragraph{Acknowledgments:}
The authors would like to thank Oded Regev for useful discussions
on topics related to this paper.


\appendix
\section{Additional Details} \label{appsec:AdditionalDetails}
In this section, we present details omitted from the main part of
the paper.

\subsection{The natural greedy algorithm is insufficient} \label{appsubsec:NonAdaptiveFail}

Let us consider the greedy algorithm that is formally described
below. This algorithm is built from a sequence of greedy steps
that set up the linear ordering. In each step, the algorithm
selects a non-selected element that has a maximal marginal
contribution to the functions. The contribution of element $j$ to
each function $f^i$ is the gain that $j$ provides towards the
threshold of $f^i$, as formally exhibited in
line~\ref{alg:PotentialDef}.

\begin{algorithm}
\caption{Cumulative Greedy}\label{cap:CumulativeGreedy}
\begin{algorithmic}[1]
\Require A collection of $n$ submodular set functions $f^1, \ldots, f^n$, where each $f^i: 2^{[m]} \to \bbR_+$ %
\Statex \qquad\quad A weight vector $w \in \bbR_+^n$ %
\Ensure A linear ordering $\pi: [m] \to [m]$ of the ground set elements \smallskip %

\State $S \leftarrow \emptyset$ %
\For{$t \leftarrow 1$ to $m$} %
    \For{$i \leftarrow 1$ to $n$} %
        \ForAll{$j \in [m] \setminus S$} %
            \State $P_{ij} \leftarrow \begin{cases} 0 & f^i(S) \geq 1, \\ \min\left\{f^i(S \cup \{ j \}) - f^i(S), 1 - f^i(S)\right\} & \makebox{otherwise} \end{cases}$ \label{alg:PotentialDef} %
        \EndFor %
    \EndFor %
    \State Let $j \in [m] \setminus S$ be the element with maximal $\sum_{i = 1}^{n}w_i P_{ij}$ %
    \State $S \leftarrow S \cup \{ j \}$ %
    \State $\pi(t) \leftarrow j$ %
\EndFor %
\end{algorithmic}
\end{algorithm}

Unfortunately, as the following theorem demonstrates, the greedy
algorithm fails to provide good approximation. The shortfall of
the algorithm is that it misjudges elements with low marginal
contribution as less important, and thus, schedules them late. As
one may expect, this turns to be crucial when many functions
depend on a single element which has a low marginal contribution.

\begin{theorem} \label{thm:NonAdaptiveFail}
The cumulative greedy algorithm has an approximation ratio of
$\Omega(n^{1/2})$.
\end{theorem}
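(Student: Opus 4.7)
The plan is to construct a family of worst-case instances, one per $n$, that forces the cumulative greedy into an $\Omega(\sqrt{n})$ blowup over the optimum. I would take $n$ unit-weight functions, a ground set of $n$ ``productive'' elements $p_1,\dots,p_n$ together with $T := \lceil n^{3/2}\rceil$ ``decoy'' elements $e_1,\dots,e_T$, and truncated-modular functions
\[
f^i(S) \;=\; \min\!\Bigl\{1,\ \mathbf{1}[p_i \in S] + \sum_{t:\,e_t\in S}\alpha_t\Bigr\},
\qquad
\alpha_t \;=\; \tfrac{2}{n}\bigl(1-\tfrac{2}{n}\bigr)^{t-1}.
\]
Each $f^i$ is normalized, monotone, and submodular (a truncated modular function). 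Since $\sum_{t=1}^{T}\alpha_t = 1-(1-2/n)^T<1$, no combination of decoys alone can cover any function, so $f^i$ is covered if and only if $p_i$ has been scheduled.

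My first substantive step is an induction on $t$ showing that the cumulative greedy selects $e_1,\dots,e_T$ in its first $T$ slots. After $\{e_1,\dots,e_{t-1}\}$ has been chosen, every function sits at coverage $s_{t-1} := 1-(1-2/n)^{t-1}$, so the algorithm's formula gives a weighted score of $n\alpha_t = 2(1-s_{t-1})$ to $e_t$, a score of only $1-s_{t-1}$ to each unused $p_i$ (which affects $f^i$ alone), and a strictly smaller $n\alpha_{t'}<n\alpha_t$ to every later decoy $e_{t'}$. Hence $e_t$ is strictly preferred and the induction closes. Once the decoys are exhausted, greedy must place the $p_i$'s in positions $T+1,\dots,T+n$; since $f^i$ is covered the instant $p_i$ appears, its cover time is $T+\sigma(i)$ for some permutation $\sigma$, giving $\alg = nT + n(n+1)/2$.

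The second step is to upper bound the optimum by exhibiting the explicit schedule that lists $p_1,\dots,p_n$ first, in which $f^i$ is covered at step $i$, so $\opt \le n(n+1)/2$. Dividing yields $\alg/\opt \ge 2T/(n+1) = \Omega(\sqrt{n})$, as desired.

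The hard part is calibrating the decoy marginals so that the greedy keeps preferring decoys for $\Theta(n^{3/2})$ rounds without any function being inadvertently covered. The geometric schedule above is tailored to pin the ratio $n\alpha_t/(1-s_{t-1})$ at exactly $2$ for every $t$, supplying an essentially unbounded runway of decoy rounds; a constant choice $\alpha_t\equiv\alpha$, by contrast, needs $\alpha>1/n$ to mislead greedy in the first round but then admits at most $1/\alpha<n$ decoys before cumulative coverage reaches $1$, which only yields a constant-factor separation. It is also reassuring that $\alpha_T$ is exponentially small in $\sqrt{n}$, so $\ln(1/\epsilon)=\Theta(\sqrt{n})$ on these instances, matching the $O(\ln(1/\epsilon))$ guarantee of Theorem~\ref{thm:Approx}.
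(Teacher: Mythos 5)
Your proof is correct, but it is a genuinely different construction from the paper's. The paper uses a compact instance with only $m=\sqrt{n}+2$ elements and \emph{linear} functions: a block of $n-\sqrt{n}$ functions each valuing element $1$ at $1-\tfrac1n$ and element $2$ at $\tfrac1n$, plus $\sqrt{n}$ singleton functions forming an identity block. Greedy takes element $1$, then prefers each identity element (absolute marginal $1$) over element $2$ (absolute marginal $(n-\sqrt{n})/n<1$), so the heavy block is covered only at the last step, giving $\Omega(n^{3/2})$ versus $O(n)$ for the ordering $1,2,3,\dots$. You instead blow the ground set up to $\Theta(n^{3/2})$ decoy elements with geometrically calibrated marginals $\alpha_t$; your induction, the no-cover-by-decoys-alone check, and the cost accounting are all sound, and the ratio $2T/(n+1)=\Omega(\sqrt{n})$ follows. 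Two points of comparison are worth noting. First, the paper's instance isolates the precise failure mode the adaptive scheme is designed to fix: element $2$ has tiny absolute marginal $1/n$ but residual-normalized marginal $1$ for many functions, so the adaptive algorithm would schedule it second and achieve $O(1)\cdot\opt$ there; since $\epsilon=1/n$, the instance cleanly separates greedy ($\Omega(\sqrt{n})$) from adaptive ($O(\log n)$). Second, on your instance the adaptive algorithm is fooled in exactly the same way (a decoy's residual-normalized score is $n\cdot\tfrac{2}{n}=2$ versus $1$ for a productive element), and indeed $\ln(1/\epsilon)=\Theta(\sqrt{n})$ there, so your construction proves the stated theorem but does not exhibit the greedy-versus-adaptive separation that motivates it. That is not a gap in the proof of the literal statement, just a weaker moral.
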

\begin{proof}
We consider an input instance that consists of $m = n^{1/2} + 2$
elements, $n$ functions, and a weight vector that all its entries
are identical. All the functions in the input instance are assumed
to be linear. A function $f: 2^{[m]} \to \bbR_+$ is called linear
if there is a valuations vector $v \in \bbR_+^m$ such that $f(S) =
\sum_{j \in S} v_j$. Accordingly, we represent the functions using
the following matrix.
$$
F = \left(
    \begin{array}{cc}
    \begin{array}{cc}
    1 - \frac{1}{n} & \frac{1}{n} \\
    \vdots&\vdots\\
    1 - \frac{1}{n} & \frac{1}{n} \\
    \end{array} &
    \begin{array}{ccc}
    \\
    & 0 & \\
    \\
    \end{array} \\
    \begin{array}{cc}
    \\
    & 0 \\
    \\
    \end{array} &
    \begin{array}{ccc}
    1 & 0 & 0 \\
    0 & \ddots & 0 \\
    0 & 0 & 1 \\
    \end{array}
    \end{array}
    \right) \ .
$$
In particular, the $i$-th row represents the values that function
$f^i$ has for the elements. Note that the size of $F$'s upper-left
non-zero sub-matrix is $(n - n^{1/2}) \times 2$, while the size of
its lower-right identity sub-matrix is $n^{1/2} \times n^{1/2}$.
Let us analyze the performance of the greedy algorithm on this
instance. Notice that in each step, the algorithm extends the
linear ordering with a non-selected element (column) whose sum of
entries is maximal. This follows since all weights are identical,
and the sum of entries of each row of $F$ is exactly $1$.
Consequently, the algorithm initially orders element $1$, then
elements $3$ to $n^{1/2}+2$, and finally, element $2$. The cost of
the algorithm is $(n - n^{1/2}) \cdot (n^{1/2} + 2) + (2 + \ldots
+ n^{1/2} + 1) = \Omega(n^{3/2})$. On the other hand, ordering the
elements according to their column number induces a linear
ordering whose cost is $(n - n^{1/2}) \cdot 2 + (3 + \ldots +
n^{1/2} + 2) = O(n)$.~
\end{proof}

\subsection{Proof of Claim~\ref{claim:Transformation}} \label{appsubsec:TransformationProof}
Consider an arbitrary point $p$ initially positioned at coordinate
$(x,y)$ in the histogram. Notice that unless this point is in the
area removed from the histogram due to the height decrease, it is
transformed to a new position $(x',y')$ as result of the $x$-axis
position update described in the claim. We argue that this new
position is contained in the primary histogram. If $p$ was
initially positioned in the bar whose height was decreased then
the argument clearly holds since the corresponding bar is shifted
while maintaining the non-decreasing property. Otherwise, the new
position must satisfy $x' \geq x$ and $y' = y$. However, this
implies that it must be contained in the primary histogram since
$p$ was initially in the histogram and this histogram is
non-decreasing. Figure~\ref{fig:Transformation} provides a
schematic description of the transformation.
\begin{figure}[!hbt]
\centerline{ \scalebox{0.32}{ \psfig{file=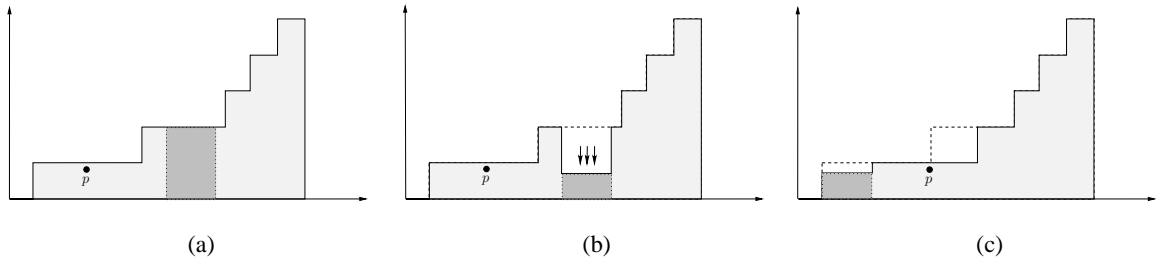} } }
\caption{The transformation described in the claim: (a) The
primary histogram; (b) The result of the height decrease; (c) The
result of the $x$-axis position update.}
\label{fig:Transformation}
\end{figure}

\begin{thebibliography}{10}

\bibitem{AgichteinBD06}
E.~Agichtein, E.~Brill, and S.~T. Dumais.
\newblock Improving web search ranking by incorporating user behavior
  information.
\newblock In {\em Proceedings 29th Annual International ACM SIGIR Conference on
  Research and Development in Information Retrieval}, pages 19--26, 2006.

\bibitem{AgrawalGHI09}
R.~Agrawal, S.~Gollapudi, A.~Halverson, and S.~Ieong.
\newblock Diversifying search results.
\newblock In {\em Proceedings 2nd International Conference on Web Search and
  Web Data Mining}, pages 5--14, 2009.

\bibitem{AroraHK10}
S.~Arora, E.~Hazan, and S.~Kale.
\newblock ${O}(\sqrt{\log(n)})$ approximation to sparsest cut in
  ${{\tilde{O}}(n^2)}$ time.
\newblock {\em SIAM J. Comput.}, 39(5):1748--1771, 2010.

\bibitem{AroraS03}
S.~Arora and M.~Sudan.
\newblock Improved low-degree testing and its applications.
\newblock {\em Combinatorica}, 23(3):365--426, 2003.

\bibitem{AzarG10}
Y.~Azar and I.~Gamzu.
\newblock Ranking with unrelated valuations.
\newblock 2010.
\newblock Manuscript.

\bibitem{AzarGY09}
Y.~Azar, I.~Gamzu, and X.~Yin.
\newblock Multiple intents re-ranking.
\newblock In {\em Proceedings 41st Annual ACM Symposium on Theory of
  Computing}, pages 669--678, 2009.

\bibitem{BansalGK10}
N.~Bansal, A.~Gupta, and R.~Krishnaswamy.
\newblock A constant factor approximation algorithm for generalized min-sum set
  cover.
\newblock In {\em Proceedings 21st Annual ACM-SIAM Symposium on Discrete
  Algorithms}, pages 1539--1545, 2010.

\bibitem{BansalK09}
N.~Bansal and S.~Khot.
\newblock Optimal long code test with one free bit.
\newblock In {\em Proceedings 50th Annual IEEE Symposium on Foundations of
  Computer Science}, pages 453--462, 2009.

\bibitem{Bar-NoyBHST98}
A.~Bar-Noy, M.~Bellare, M.~M. Halld{\'o}rsson, H.~Shachnai, and
T.~Tamir.
\newblock On chromatic sums and distributed resource allocation.
\newblock {\em Inf. Comput.}, 140(2):183--202, 1998.

\bibitem{BellareGLR93}
M.~Bellare, S.~Goldwasser, C.~Lund, and A.~Russell.
\newblock Efficient probabilistically checkable proofs and applications to
  approximations.
\newblock In {\em Proceedings 25th Annual ACM Symposium on Theory of
  Computing}, pages 294--304, 1993.

\bibitem{ChekuriK04}
C.~Chekuri and S.~Khanna.
\newblock Approximation algorithms for minimizing average weighted completion
  time.
\newblock In {\em Handbook of Scheduling: Algorithms, Models, and Performance
  Analysis}. CRC Press, 2004.

\bibitem{ClarkeKCVABM08}
C.~L.~A. Clarke, M.~Kolla, G.~V. Cormack, O.~Vechtomova,
A.~Ashkan,
  S.~B{\"u}ttcher, and I.~MacKinnon.
\newblock Novelty and diversity in information retrieval evaluation.
\newblock In {\em Proceedings 31st Annual International ACM SIGIR Conference on
  Research and Development in Information Retrieval}, pages 659--666, 2008.

\bibitem{DouSW07}
Z.~Dou, R.~Song, and J.-R. Wen.
\newblock A large-scale evaluation and analysis of personalized search
  strategies.
\newblock In {\em Proceedings 16th International Conference on World Wide Web},
  pages 581--590, 2007.

\bibitem{DupretP08}
G.~Dupret and B.~Piwowarski.
\newblock A user browsing model to predict search engine click data from past
  observations.
\newblock In {\em Proceedings 31st Annual International ACM SIGIR Conference on
  Research and Development in Information Retrieval}, pages 331--338, 2008.

\bibitem{Feige98}
U.~Feige.
\newblock A threshold of ln {\it n} for approximating set cover.
\newblock {\em J. ACM}, 45(4):634--652, 1998.

\bibitem{FeigeLT04}
U.~Feige, L.~Lov{\'a}sz, and P.~Tetali.
\newblock Approximating min sum set cover.
\newblock {\em Algorithmica}, 40(4):219--234, 2004.

\bibitem{GargK07}
N.~Garg and J.~K{\"o}nemann.
\newblock Faster and simpler algorithms for multicommodity flow and other
  fractional packing problems.
\newblock {\em SIAM J. Comput.}, 37(2):630--652, 2007.

\bibitem{GoelKTW09}
G.~Goel, C.~Karande, P.~Tripathi, and L.~Wang.
\newblock Approximability of combinatorial problems with multi-agent submodular
  cost functions.
\newblock In {\em Proceedings 50th Annual IEEE Symposium on Foundations of
  Computer Science}, pages 755--764, 2009.

\bibitem{GoemansHIM09}
M.~X. Goemans, N.~J.~A. Harvey, S.~Iwata, and V.~S. Mirrokni.
\newblock Approximating submodular functions everywhere.
\newblock In {\em Proceedings 20th Annual ACM-SIAM Symposium on Discrete
  Algorithms}, pages 535--544, 2009.

\bibitem{GrotschelLS81}
M.~Gr{\"o}tschel, L.~Lov{\'a}sz, and A.~Schrijver.
\newblock The ellipsoid method and its consequences in combinatorial
  optimization.
\newblock {\em Combinatorica}, 1(2):169--197, 1981.

\bibitem{HassinL05}
R.~Hassin and A.~Levin.
\newblock An approximation algorithm for the minimum latency set cover problem.
\newblock In {\em Proceedings 13th Annual European Symposium on Algorithms},
  pages 726--733, 2005.

\bibitem{Iwata03}
S.~Iwata.
\newblock A faster scaling algorithm for minimizing submodular functions.
\newblock {\em SIAM J. Comput.}, 32(4):833--840, 2003.

\bibitem{Iwata08}
S.~Iwata.
\newblock Submodular function minimization.
\newblock {\em Math. Program.}, 112(1):45--64, 2008.

\bibitem{IwataFF01}
S.~Iwata, L.~Fleischer, and S.~Fujishige.
\newblock A combinatorial strongly polynomial algorithm for minimizing
  submodular functions.
\newblock {\em J. ACM}, 48(4):761--777, 2001.

\bibitem{IwataN09}
S.~Iwata and K.~Nagano.
\newblock Submodular function minimization under covering constraints.
\newblock In {\em Proceedings 50th Annual IEEE Symposium on Foundations of
  Computer Science}, pages 671--680, 2009.

\bibitem{IwataO09}
S.~Iwata and J.~B. Orlin.
\newblock A simple combinatorial algorithm for submodular function
  minimization.
\newblock In {\em Proceedings 20th Annual ACM-SIAM Symposium on Discrete
  Algorithms}, pages 1230--1237, 2009.

\bibitem{JoachimsGPHRG07}
T.~Joachims, L.~A. Granka, B.~Pan, H.~Hembrooke, F.~Radlinski, and
G.~Gay.
\newblock Evaluating the accuracy of implicit feedback from clicks and query
  reformulations in web search.
\newblock {\em ACM Trans. Inf. Syst.}, 25(2), 2007.

\bibitem{Orlin07}
J.~B. Orlin.
\newblock A faster strongly polynomial time algorithm for submodular function
  minimization.
\newblock In {\em Proceedings 12th International Conference on Integer
  Programming and Combinatorial Optimization}, pages 240--251, 2007.

\bibitem{PlotkinST95}
S.~A. Plotkin, D.~B. Shmoys, and \'{E}va Tardos.
\newblock Fast approximation algorithms for fractional packing and covering
  problems.
\newblock {\em Math. Operations Research}, 20:257--301, 1995.

\bibitem{RazS97}
R.~Raz and S.~Safra.
\newblock A sub-constant error-probability low-degree test, and a sub-constant
  error-probability {PCP} characterization of {NP}.
\newblock In {\em Proceedings 29th Annual ACM Symposium on Theory of
  Computing}, pages 475--484, 1997.

\bibitem{Schrijver00}
A.~Schrijver.
\newblock A combinatorial algorithm minimizing submodular functions in strongly
  polynomial time.
\newblock {\em J. Comb. Theory, Ser. B}, 80(2):346--355, 2000.

\bibitem{SvitkinaF08}
Z.~Svitkina and L.~Fleischer.
\newblock Submodular approximation: Sampling-based algorithms and lower bounds.
\newblock In {\em Proceedings 49th Annual IEEE Symposium on Foundations of
  Computer Science}, pages 697--706, 2008.

\bibitem{TeevanDH05}
J.~Teevan, S.~T. Dumais, and E.~Horvitz.
\newblock Personalizing search via automated analysis of interests and
  activities.
\newblock In {\em Proceedings 28th Annual International ACM SIGIR Conference on
  Research and Development in Information Retrieval}, pages 449--456, 2005.

\bibitem{Woeginger03}
G.~J. Woeginger.
\newblock On the approximability of average completion time scheduling under
  precedence constraints.
\newblock {\em Discrete Applied Mathematics}, 131(1):237--252, 2003.

\bibitem{Wolsey82b}
L.~A. Wolsey.
\newblock An analysis of the greedy algorithm for the submodular set covering
  problem.
\newblock {\em Combinatorica}, 2(4):385--393, 1982.

\end{thebibliography}
\end{document}